\newtheorem{theorem}{Theorem}
\title{\LARGE \bf Experimental Evaluation of Continuum Deformation\\ 
with a Five Quadrotor Team}
\author{Matthew Romano$^{*,\dagger,a}$, Prince Kuevor$^{\dagger,a}$, Derek Lukacs$^{b}$, Owen Marshall$^a$, Mia Stevens$^{a}$,\\
Hossein Rastgoftar$^{b}$, James Cutler$^b$, Ella Atkins$^{a,b}$%
\thanks{* Corresponding author}
\thanks{$^{\dagger}$ Both authors contributed equally to this work.}%
\thanks{$^a$Authors are with the Robotics Institute at the University of Michigan {\tt\small\{mmroma, kuevpr, oamarsh, minist, ematkins\}@umich.edu}}%
\thanks{$^b$Authors are with the Department of Aerospace Engineering at the University of Michigan {\tt\small\{dlukacs, hosseinr, jwcutler, ematkins\}@umich.edu}}%
}
\date{August 2018}
\begin{document}

\maketitle

\begin{abstract}
This paper experimentally evaluates continuum deformation cooperative control for the first time.  Theoretical results are expanded to place a bounding triangle on the leader-follower system such that the team is contained despite nontrivial tracking error. Flight tests were conducted with custom quadrotors running a modified version of ArduPilot on a BeagleBone Blue in M-Air, an outdoor netted flight facility. Motion capture and an onboard inertial measurement unit were used for state estimation. Position error was characterized in single vehicle tests using quintic spline trajectories and different reference velocities. Five-quadrotor leader trajectories were generated, and followers executed the continuum deformation control law in-flight. Flight tests successfully demonstrated continuum deformation; future work in characterizing error propagation from leaders to followers is discussed.
\end{abstract}

\section{Introduction}

Cooperative control is a popular area of theoretical research.   
Virtual structure (VS) \cite{ren2004decentralized}, consensus \cite{ren2007information, rahbari2014incremental, inzucchi2015management}, containment control  \cite{qin2017containment, li2015containment}, and continuum deformation \cite{rastgoftar2016continuum, rastgoftar2017continuum} are examples of multi-agent system  (MAS) control. VS is a centralized approach, while others are decentralized. 
Consensus is the most commonly-applied decentralized cooperative control technique \cite{liu2015finite, cao2013overview, rahbari2014incremental, yang2016distributed, inzucchi2015management}. Distributed consensus was applied for agent coordination in \cite{rao2014sliding, ren2009distributed} and flight tested in \cite{han2014multiple, guerrero2012mini}. Consensus guided by a single leader is studied in \cite{song2010second, xu2016robust} and flight tested in \cite{drak2014autonomous, namerikawa2018consensus}.
Cooperative control has been applied to unmanned aircraft system (UAS) teams for tasks such as surveillance \cite{nigam2012control}, area surveys \cite{han2013low}, and payload delivery \cite{rastgoftar2018cooperative}. 

\begin{figure}[ht!]
  \centering
    \includegraphics[width=\columnwidth,trim=363 159 420 253, clip]{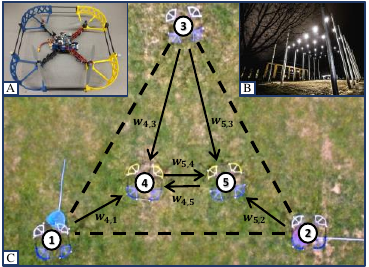}
  \caption{A) Test Quadrotor, B) M-Air Netted Facility, C) Mid-flight, overhead snapshot in M-Air with three leaders (1,2,3) and two followers (4,5) flying a continuum deformation.}
  \label{fig:quadrotor_system}
\end{figure}

Continuum deformation is a multi-agent control technique that allows translation, rotation, and shearing of a bounding envelope while ensuring agents remain within the bounding envelope and avoid collisions. 
To do this, continuum deformation treats agents as particles of a body that deforms under a homogeneous transformation \cite{rastgoftar2016continuum, rastgoftar2014evolution}. A desired $n$-D homogeneous transformation ($n=1,2,3$) is defined by $n+1$ leaders in $\mathbb{R}^n$ and acquired by followers through local communication. Shared state data is weighted based on a reference configuration.
Continuum deformation stability is analyzed in \cite{rastgoftar2016continuum}, while coordination under switching communication topologies is studied in \cite{rastgoftar2017continuum}.  Characterization of the homogeneous transformation and safety guarantees are key features of continuum deformation.  

This paper presents results from the first experimental evaluation of continuum deformation. Single-agent error is characterized and used in designing leader agent trajectories that satisfy four constraints: (1) Follower containment, (2) Collision avoidance, (3) Bounding of agent deviation from local desired position, and (4) Bounding of agent deviation from global desired position. A five-quadrotor team receiving motion capture data is deployed in tests. Leaders execute prescribed trajectories; followers receive neighbor positions via the communication topology in Fig. \ref{fig:quadrotor_system}C. Tracking errors of leaders and followers are analyzed with respect to the bounding envelope designed to contain the team given expected single-vehicle deviations. 

Sec. \ref{sec:continuum-deformation} summarizes continuum deformation theory, while Sec. \ref{sec:generating-path} describes leader flight planning and presents test trajectories. Sec. \ref{sec:experimental-setup} summarizes the experimental apparatus including quadrotors, electronics, sensors, and software.  Sec. \ref{sec:experimental-results} presents flight test results, followed by a discussion (Sec. \ref{sec:discussion}) and conclusion (Sec. \ref{sec:conclusion}).





\section{Continuum Deformation Coordination Review}
\label{sec:continuum-deformation}

A $2D$ continuum deformation is defined by a homogeneous transformation  \cite{lai2009introduction}:
\begin{align}
\label{Homogggennooustransformmm}
t\geq t_0, \qquad  \mathbf{r}_{i,HT}(t)=\mathbf{Q}(t,t_0)\mathbf{r}_{i,0}+\mathbf{d}(t,t_0).
\end{align}
where $t_0$ denotes the initial time,
$\mathbf{Q}(t,t_0) \in \mathbb{R}^{2 \times 2}$
is a non-singular planar deformation matrix,
$\mathbf{d}(t,t_0) \in \mathbb{R}^{2 \times 1}$
is the rigid body displacement vector, and $\mathbf{r}_{i,HT}(t)=[x_{i,HT}(t)~y_{i,HT}(t)]^T$ is the \emph{global desired position} of agent $i$. Note that $\mathbf{d}(t_0,t_0)=\mathbf 0 \in \mathbb R^{2 \times 1}$, and $\mathbf{Q}(t_0,t_0)= \mathbf{I}_2 \in \mathbb R^{2 \times 2}$, so $ \mathbf{r}_{i,0}=\mathbf{r}_{i,HT}(t_0)=[x_{i,0}~y_{i,0}]^T$ is the initial position of quadrotor $i\in \mathcal{V}$. For the remainder of this paper, the argument of time may be ommited from state variables for brevity, i.e. $\mathbf{r}_{i,HT} = \mathbf{r}_{i,HT}(t).$
\\
\textbf{Continuum Deformation Definition:} A 2D homogeneous transformation can be acquired by an $N$-agent quadrotor team with index numbers $\mathcal{V}=\{1,\cdots,N\}$. Let $\mathcal{V}=\mathcal{V}_L\bigcup \mathcal{V}_F$, with leaders $\mathcal{V}_L=\{1,2,3\}$ and followers $\mathcal{V}_F=\{4,\cdots,N\}$. Leaders form a \textit{leading triangle} for all $t\geq t_0$. Followers acquire the continuum deformation by local communication. Leader positions can be uniquely related to $\mathbf{Q}$ and $\mathbf{d}$ \cite{rastgoftar2016continuum}:
\begin{equation}
t\geq t_0,\qquad 
\begin{bmatrix}
\mathrm{vec}\left(\mathbf{Q}^T\right)\\
\mathbf{d}
\end{bmatrix}
=\begin{bmatrix}
\mathbf{I}_2\otimes\mathbf{P}_0&\mathbf{I}_2\otimes\mathbf{1}_3
\end{bmatrix}
^{-1}
\mathrm{vec}\left(\mathbf{P}_{HT}\right),
\end{equation}
where "$\otimes$" is the Kronecker product symbol, $\mathbf{I}_2\in \mathbb{R}^{2\times 2}$ is the identity matrix, and $\mathbf{1}_3\in \mathbb{R}^{3\times 1}$ is a vector of ones,
\[
\mathbf{P}_{0}=
\begin{bmatrix}
\mathbf{r}_{1,0}^T\\
\mathbf{r}_{2,0}^T\\
\mathbf{r}_{3,0}^T\\
\end{bmatrix}
\in \mathbb{R}^{3\times 2},~\mathbf{P}_{HT}=\begin{bmatrix}
\mathbf{r}_{1,HT}^T\\
\mathbf{r}_{2,HT}^T\\
\mathbf{r}_{3,HT}^T\\
\end{bmatrix}
\in \mathbb{R}^{3\times 2}
.
\]
\textbf{Continuum Deformation Acquisition:} A weighted directed graph $\mathcal{G}=\mathcal{G}\left(\mathcal{V},\mathcal{E}\right)$ defines inter-agent communication as shown in Fig. \ref{fig:quadrotor_system}C. Given edge set $\mathcal{E}\subset \mathcal{V}\times \mathcal{V}$, the in-neighbor agents of follower $i\in \mathcal{V}_F$ is defined by 
\[
\forall i\in \mathcal{V}_F,\qquad \mathcal{N}_i=\{j|(i,j)\in \mathcal{E}\}.
\]
For an $n$-D continuum deformation, $\big|\mathcal{N}_i\big|=n+1,~\forall i\in \mathcal{V}_F$ and in-neighbor agents of every follower form an $n$-D polytope. Therefore, each follower communicates with three in-neighbor agents, forming a triangle in a 2D continuum deformation. Let $i_1$, $i_2$, and $i_3$ denote index numbers of follower $i$'s in-neighbor agents, initially positioned at $\mathbf{r}_{i_1,0}$, $\mathbf{r}_{i_2,0}$, and $\mathbf{r}_{i_3,0}$. Then, communication weights of follower $i\in \mathcal{V}_F$ denoted $w_{i,i_1}$, $w_{i,i_2}$, and $w_{i,i_3}$ are given by \cite{rastgoftar2016continuum}:
\begin{equation}
\begin{bmatrix}
w_{i,i_1}\\
w_{i,i_2}\\
w_{i,i_3}
\end{bmatrix}
=
\begin{bmatrix}
x_{i_1,0}&x_{i_2,0}&x_{i_3,0}\\
y_{i_1,0}&y_{i_2,0}&y_{i_3,0}\\
1&1&1
\end{bmatrix}
^{-1}
\begin{bmatrix}
x_{i,0}\\
y_{i,0}\\
1
\end{bmatrix}
.
\end{equation}
Any edge that does not appear in the in-neighbor set of $\mathcal{N}_i$ of any follower $i \in \mathcal{V}_F$ is zero. Let $\mathbf{r}_i$ denote actual position of agent $i\in \mathcal{V}$. The \emph{local desired position} of agent $i\in \mathcal{V}$ is:
\begin{equation}
\label{eqn:localDesired}
\mathbf{r}_{i,d}=
\begin{cases}
\mathbf{r}_{i,HT}&i\in \mathcal{V}_L\\
\mathbf\sum_{j\in \mathcal{N}_i} w_{i,j}\mathbf{r}_j&i\in \mathcal{V}_F
\end{cases}
\end{equation}
where $\mathbf{r}_{i,d}$ is the reference trajectory of each quadrotor $i\in \mathcal{V}$. Thus, local and global desired positions are the same for leaders, but this need not hold for followers. 
\\
\textbf{Continuum Deformation Coordination Safety:} Given leader desired trajectories, we define a bounding triangle as a dilation of the original leading triangle with the following properties: (1) Both leading and bounding triangles have a common centroid at any time $t$, (2) Parallel sides of both triangles are separated by $D_l$ at any time $t$ (see Fig. \ref{fig:bothArr}).

\begin{figure}[]
 	\centering
 	\begin{subfigure}[c]{0.73\columnwidth}
	    \includegraphics[width = \textwidth,trim=110 420 135 77, clip]{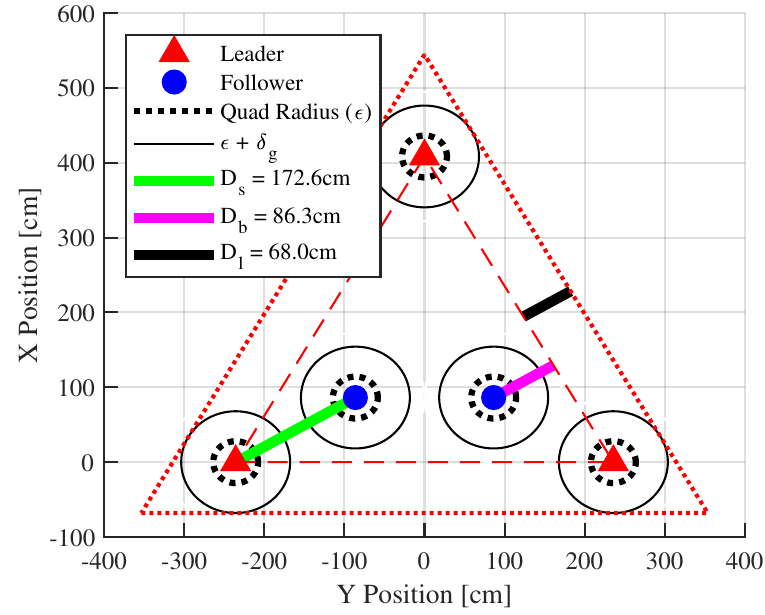}
	    \caption{Initial configuration.} 
	    \label{fig:initial-config}
    \end{subfigure}
    \vspace{0.2cm}
 	\begin{subfigure}[c]{0.73\columnwidth}
	    \includegraphics[width = \textwidth,trim=110 72 135 400, clip]{figures/bothArrangements_dilatedDl_edited.pdf}
	    \caption{Contracted configuration.} 
	    \label{fig:contracted-config}
    \end{subfigure}
	\caption{Leaders and followers shown in their global desired positions for initial and contracted configurations, respectively, with leading (smaller) and bounding (larger) triangles shown. $D_s$, $D_b$, and $D_l$ are shown in the initial configuration.} 
	\label{fig:bothArr}
\end{figure}

\begin{theorem}
\label{thm:Thm1}
Assume each quadrotor is enclosed by a ball of radius $\epsilon$. $D_s$ denotes the minimum separation distance between any pair of agents at $t=t_0$, $D_b$ is the minimum distance from any follower to the leading triangle boundary, and $D_l=\delta_{g}+\epsilon$ is the distance of two parallel sides of the leading and bounding triangles. Here, $\delta_{g}$ is the maximum deviation from the \emph{global} desired position for every quadrotor: $\|\mathbf{r}_i-\mathbf{r}_{i,HT}\|\leq\delta_{g}, \forall i\in \mathcal{V} $. Let
 \begin{equation}
    \label{eqn:deltaMax}
    \delta_{g,\mathrm{max}} =\min\{ \frac{D_s - 2\epsilon}{2}, D_b - \epsilon\}.
\end{equation}
Collision avoidance between agents, follower containment within the leading triangle, and leader containment within the bounding triangle are all guaranteed if
\begin{equation}
\label{IMP1}
    \lambda_{\mathrm{min}}\leq \inf\limits_{\forall t}\big\{\lambda_1\left(t\right),\lambda_2\left(t\right)\big\},
\end{equation}
where
\begin{equation}
    \label{eqn:lamMin}
    \lambda_{\mathrm{min}}=\dfrac{\delta_{g}+\epsilon}{\delta_{g,\mathrm{max}}+\epsilon},
\end{equation}
$\lambda_1$ and $\lambda_2$ denote eigenvalues of matrix $\mathbf{U}_D=\left(\mathbf{Q}^T\mathbf{Q}\right)^{\frac{1}{2}}$. 
\end{theorem}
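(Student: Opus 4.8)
The plan is to reduce all three guarantees to a single geometric fact: under the homogeneous transformation, neither an inter-agent distance nor the perpendicular distance from an agent to a triangle edge can contract by a factor smaller than $\min\{\lambda_1,\lambda_2\}$, the smallest singular value of $\mathbf{Q}$ (equivalently the smallest eigenvalue of $\mathbf{U}_D$). First I would record the two scaling bounds. Since $\mathbf{r}_{i,HT}-\mathbf{r}_{j,HT}=\mathbf{Q}\left(\mathbf{r}_{i,0}-\mathbf{r}_{j,0}\right)$, the polar decomposition $\mathbf{Q}=\mathbf{R}\mathbf{U}_D$ gives $\|\mathbf{r}_{i,HT}-\mathbf{r}_{j,HT}\|\ge\min\{\lambda_1,\lambda_2\}\,\|\mathbf{r}_{i,0}-\mathbf{r}_{j,0}\|$. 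For the perpendicular distance $d$ from an agent to the line through a triangle edge of unit direction $\mathbf{t}$, writing $\mathbf{u}$ for the offset from a point of the edge to the agent so that $d=|\mathbf{u}\times\mathbf{t}|$, the planar identity $\mathbf{Q}\mathbf{u}\times\mathbf{Q}\mathbf{t}=\det(\mathbf{Q})\,(\mathbf{u}\times\mathbf{t})$ yields the transformed distance $d'=\frac{|\det\mathbf{Q}|}{\|\mathbf{Q}\mathbf{t}\|}\,d\ge\frac{\lambda_1\lambda_2}{\max\{\lambda_1,\lambda_2\}}\,d=\min\{\lambda_1,\lambda_2\}\,d$. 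These two estimates are the technical core.

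Next I would dispatch the three claims. For collision avoidance, combine the distance bound with the triangle inequality and the deviation assumption to get $\|\mathbf{r}_i-\mathbf{r}_j\|\ge\min\{\lambda_1,\lambda_2\}\,D_s-2\delta$, which exceeds $2\epsilon$ exactly when $\min\{\lambda_1,\lambda_2\}\ge\frac{2(\delta+\epsilon)}{D_s}=\frac{\delta+\epsilon}{\frac{D_s-2\epsilon}{2}+\epsilon}$. For follower containment, I would note that $\mathbf{r}_{i,HT}=\mathbf{Q}\mathbf{r}_{i,0}+\mathbf{d}$ is the affine image of an interior point of the initial leading triangle, hence lies inside the deformed leading triangle; the edge-distance bound makes its clearance at least $\min\{\lambda_1,\lambda_2\}\,D_b$, so the ball of radius $\epsilon$ about the actual position (itself within $\delta$ of $\mathbf{r}_{i,HT}$) stays inside provided $\min\{\lambda_1,\lambda_2\}\,D_b\ge\delta+\epsilon$, i.e. $\min\{\lambda_1,\lambda_2\}\ge\frac{\delta+\epsilon}{(D_b-\epsilon)+\epsilon}$. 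Taking the worst of the two requirements gives precisely $\min\{\lambda_1,\lambda_2\}\ge\frac{\delta+\epsilon}{\delta_{\mathrm{max}}+\epsilon}=\lambda_{\mathrm{min}}$, because $\delta_{\mathrm{max}}+\epsilon=\min\{\tfrac{D_s}{2},D_b\}$.

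Finally, for the bounding triangle I would argue that no additional smallness condition arises, since $D_l$ was chosen to absorb the error budget. Followers already lie, together with their $\epsilon$-balls, inside the leading triangle, which is contained in the bounding triangle, so they are immediately done. Each leader sits at a vertex of the leading triangle, whose perpendicular distance to each of the two adjacent bounding-triangle edges equals $D_l=\delta+\epsilon$ by construction; as the leader's physical extent from its desired vertex is at most $\delta+\epsilon$, its $\epsilon$-ball is tangent to but does not cross those edges and trivially clears the third, so it stays inside the bounding triangle. Assembling the three parts under the single hypothesis $\lambda_{\mathrm{min}}\le\inf_t\{\lambda_1(t),\lambda_2(t)\}$ completes the argument. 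I expect the main obstacle to be the perpendicular-distance bound: unlike the Euclidean-distance estimate it is not merely the smallest singular value applied to a vector, and one must verify through the determinant identity that the edge-normal clearance cannot shrink faster than $\min\{\lambda_1,\lambda_2\}$ even when $\mathbf{Q}$ is strongly anisotropic.
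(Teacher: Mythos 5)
Your argument is correct and complete; note, however, that the paper itself supplies no proof for Theorem~\ref{thm:Thm1} --- it simply defers to the reference \cite{rastgoftar2016asymptotic} --- so what you have written is a self-contained reconstruction rather than an alternative to an in-paper derivation. Your route is essentially the standard one in the continuum-deformation literature: lower-bound every relevant length under the affine map by the smallest singular value of $\mathbf{Q}$ (the smallest eigenvalue of $\mathbf{U}_D$), then spend the budget $\delta+\epsilon$ against $D_s/2$ and $D_b$, which is exactly why $\delta_{\mathrm{max}}+\epsilon=\min\{D_s/2,\,D_b\}$ makes the two requirements collapse into the single threshold $\lambda_{\mathrm{min}}$. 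The two technical points you flag are handled correctly: the identity $\mathbf{Q}\mathbf{u}\times\mathbf{Q}\mathbf{t}=\det(\mathbf{Q})(\mathbf{u}\times\mathbf{t})$ together with $\|\mathbf{Q}\mathbf{t}\|\leq\max\{\lambda_1,\lambda_2\}$ does give the edge-clearance contraction factor $\min\{\lambda_1,\lambda_2\}$, and this is genuinely needed since the naive singular-value bound applies to vectors, not to point-to-line distances. One small step worth making explicit: applying that clearance bound to $D_b$ uses the fact that, for an interior point of a convex polygon, the distance to the boundary equals the minimum of the perpendicular distances to the three edge \emph{lines} (the inscribed-ball argument), so that the quantity which scales by $\min\{\lambda_1,\lambda_2\}$ is the same quantity $D_b$ measures. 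Your observation that the bounding-triangle claim needs no condition on $\lambda_{\mathrm{min}}$ at all --- the $D_l=\delta+\epsilon$ offset is constructed precisely to absorb $\|\mathbf{r}_i-\mathbf{r}_{i,HT}\|+\epsilon$ for every agent, leaders at the vertices included --- is also right, and is consistent with how the experimental section uses the bounding triangle as an unconditional safety envelope.
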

\begin{proof}
See the proof in \cite{rastgoftar2016asymptotic}.
\end{proof}

This paper implements a 2D, five-agent continuum deformation where $N=5$ and $\mathcal{V}_F=\{4,5\}$. A block diagram of this configuration can be seen in Fig. \ref{fig:Controls_Block_Diagram}. 
The position vectors $\mathbf{r}_{L}, \mathbf{r}_{F}, \mathbf{r}_{L,d}$, and $\mathbf{r}_{F,d}$ denote the leader positions, follower positions, leader desired positions, and follower desired positions respectively and are defined as 

\begin{equation*}
    \mathbf{r}_{L} = 
    \text{vec}(
    \begin{bmatrix}
        \mathbf{r}_{1}^T \\
        \mathbf{r}_{2}^T \\ 
        \mathbf{r}_{3}^T \\
    \end{bmatrix}
    )
    \in \mathbb{R}^{6 \times 1}
    ,~
    \mathbf{r}_{F} = 
    \text{vec}(
    \begin{bmatrix}
        \mathbf{r}_{4}^T \\
        \mathbf{r}_{5}^T \\
    \end{bmatrix}
    )
    \in \mathbb{R}^{4 \times 1},
\end{equation*}

\begin{equation*}
    \mathbf{r}_{L,d} = 
    \text{vec}(
    \begin{bmatrix}
        \mathbf{r}_{1,d}^T \\
        \mathbf{r}_{2,d}^T \\ 
        \mathbf{r}_{3,d}^T \\
    \end{bmatrix}
    )
    \in \mathbb{R}^{6 \times 1}
    ,~
    \mathbf{r}_{F,d} = 
    \text{vec}(
    \begin{bmatrix}
        \mathbf{r}_{4,d}^T \\
        \mathbf{r}_{5,d}^T \\
    \end{bmatrix}
    )
    \in \mathbb{R}^{4 \times 1}.
\end{equation*}

The communication weight matrices $\mathbf{W}_{a}$ and $\mathbf{W}_{b}$ are defined as
\begin{equation*}
    \mathbf{W}_{a} = 
    \begin{bmatrix}
        w_{4,1} & w_{4,2} & w_{4,3} \\
        w_{5,1} & w_{5,2} & w_{5,3} \\
    \end{bmatrix}   
    ,~
    \mathbf{W}_{b} = 
    \begin{bmatrix}
        w_{4,4} & w_{4,5}\\
        w_{5,4} & w_{5,5}\\
    \end{bmatrix}
\end{equation*}

The state variables, $\mathbf{X}_{L}$ and $\mathbf{X}_{F}$, denote the position, velocity, orientation, and angular velocity of each agent. $G_{L}$ and $G_{F}$ represent some unknown dynamics that map the state and motor input commands, $\mathbf{U_L}$ and $\mathbf{U_F}$, to the state dynamics. $H_{L}$ and $H_{F}$ output the respective position vectors from the state vectors $\mathbf{X}_{L}$ and $\mathbf{X}_{F}$.

\begin{figure}[]
 	\centering
	\includegraphics[width = \columnwidth, trim=25 42 0 60, clip]{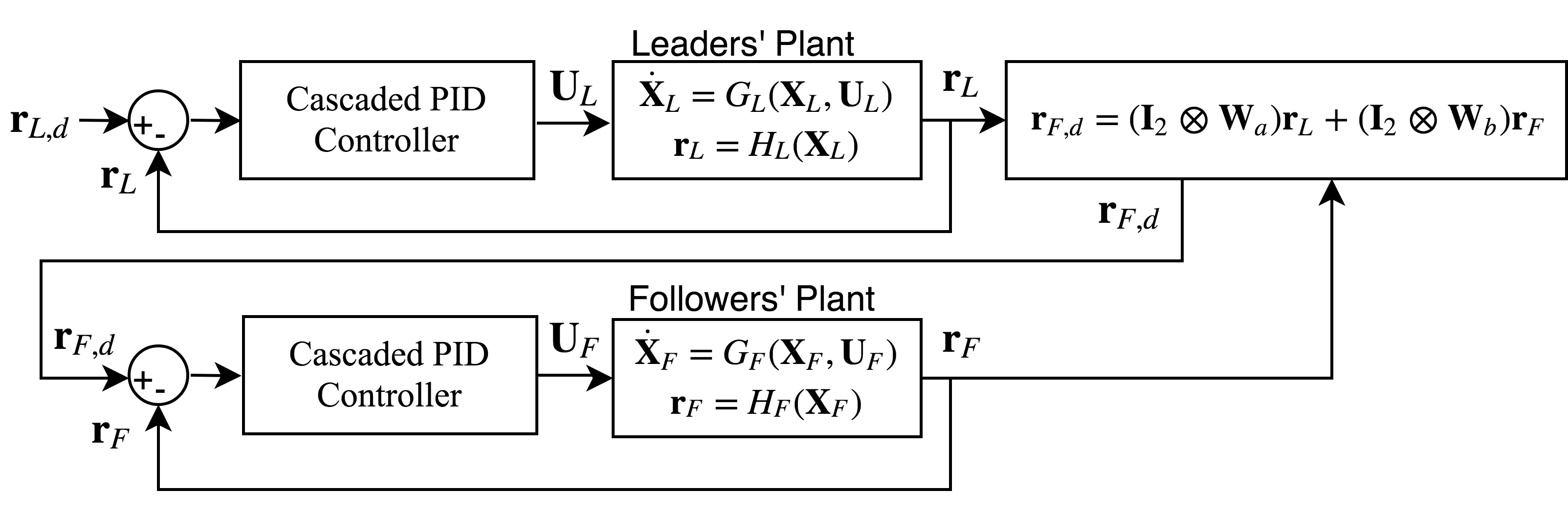}
	\caption{Continuum deformation block diagram.} 
	\label{fig:Controls_Block_Diagram}
\end{figure}

\section{Generating the Leading Triangle Path}
\label{sec:generating-path}
Flight tests investigate if real-world continuum deformation will satisfy four constraints: follower containment within the leading triangle, collision avoidance between agents, bounding of deviation from local desired position, and bounding of deviation from global desired position. 
To evaluate, leaders are set as close together as they theoretically can be.
Leader separation is given by scaling factor $\lambda_{\mathrm{min}}$ (Eq. \ref{eqn:lamMin}). With $\epsilon$ = 28cm (Sec. \ref{sec:experimental-setup}) and $\delta_{g}$ = 40cm (Sec. \ref{sec:experimental-results}), the leaders form an equilateral triangle with edge length $l$ = 3.72m at the contracted $\lambda_{\mathrm{min}}$ limit.

Fig. \ref{fig:trajOutline} shows the planned trajectory by illustrating the path of the leaders.
The flight begins with the leaders contracting from an initial state to the $\lambda_{\mathrm{min}}$ limit, then continues with the leaders traversing a square with edge-length 1m. The flight concludes with leaders expanding back to the initial configuration; the pilot then commands simultaneous descent and landing. During the flight, the leaders fly to six waypoints. Leaders move from their initial equilateral triangle (pose 1) with edge length $l$ = 4.72m to their contracted configuration (pose 2) with $l$ = 3.72m. Leaders then traverse three of the four sides of the 1m edge-length square through poses 3, 4, and 5. From pose 5, the leaders complete the square by returning to pose 2 and finally expand to pose 1 to end in their initial configuration. 

\begin{figure}[]
 	\centering
	\includegraphics[width = \columnwidth,trim=0 140 0 100, clip]{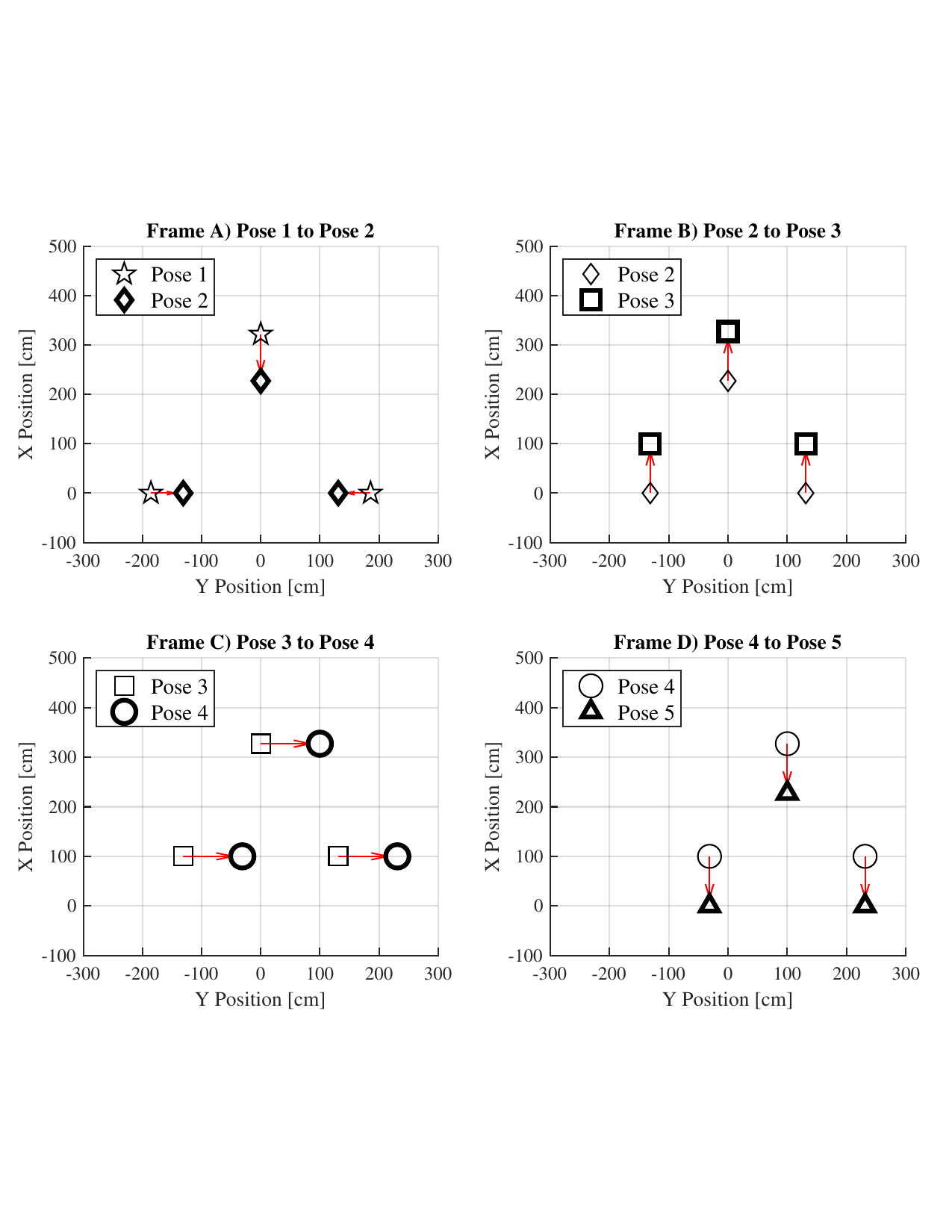}
	\caption{Waypoints traversed by leaders during flight. Poses 1-2 show motion to the contracted position. Poses 2-5 show leaders traversing three sides of the 1m edge-length square. From pose 5, leaders return to pose 2 then to pose 1.}
	\label{fig:trajOutline}
\end{figure}




All leaders move in straight lines to reach their next waypoint. The following quintic spline guidance law is used for generating the desired trajectory for each leader:
\begin{equation}
\label{eqn:spline}
    \begin{split}
        i\in \mathcal{V}_{L},\qquad \mathbf{r}_{i,d}(t)  & = \mathbf{a}_i +  \mathbf{b}_it + \mathbf{c}_it^2 + \mathbf{d}_it^3 + \mathbf{e}_it^4 + \mathbf{f}_it^5\\
        \dot{\mathbf{r}}_{i,d}(t)  & =  \mathbf{b}_i + 2\mathbf{c}_it + 3\mathbf{d}_it^2 + 4\mathbf{e}_it^3 + 5\mathbf{f}_it^4\\
        \Ddot{\mathbf{r}}_{i,d}(t) &= 2\mathbf{c}_i + 6\mathbf{d}_it + 12\mathbf{e}_it^2 + 20\mathbf{f}_it^3,
    \end{split}
\end{equation}

\noindent where $\mathbf{r}_{i,d}(t)$, $\dot{\mathbf{r}}_{i,d}(t)$, and $\Ddot{\mathbf{r}}_{i,d}(t)$ are the 2D desired position, velocity, and acceleration of the $i^{th}$ agent, respectively.  With Eq. \ref{eqn:spline} we enforce six constraints: $\mathbf{r}_{i,d}(t_0) = \mathbf{r}_{i,0}$, $\mathbf{r}_{i,d}(t_f) = \mathbf{r}_{i,f}$, $\dot{\mathbf{r}}_{i,d}(t_0) = \dot{\mathbf{r}}_{i,d}(t_f) = 0$, and $\Ddot{\mathbf{r}}_{i,d}(t_0) = \Ddot{\mathbf{r}}_{i,d}(t_f) = 0$ to solve for coefficients $\{\mathbf{a}_i,\mathbf{b}_i,\mathbf{c}_i,\mathbf{d}_i,\mathbf{e}_i,\mathbf{f}_i\}$ where $\mathbf{r}_{i,0}$ and $\mathbf{r}_{i,f}$ are the initial and final positions of the $i^{th}$ agent, respectively. From this guidance law, we define $v_\mathrm{max}$ as the largest desired velocity of any agent $\dot{\mathbf{r}}_{i,d}(t)$ for all flight times $t$. This desired velocity serves as a feed-forward term in the velocity-tracking portion of the cascaded PID controller (Sec. \ref{sec:experimental-setup}). Time of flight $(t_f - t_0)$ between each pair of poses is 3.75s resulting in a translation of 1m when $v_\mathrm{max}$ = 50$\frac{cm}{s}$.




\section{Experimental Setup}
\label{sec:experimental-setup}
Five identical quadrotors were constructed from off-the-shelf hobbyist components and 3D-printed parts (Fig. \ref{fig:quadrotor_system}A). The frame measures 33cm diagonally between each pair of 920kV brushless DC motors controlled by 600Hz Electronic Speed Controllers (ESCs) spinning 8"$\times$4.5" nylon propellers. A 3S 3000mAh Lithium Polymer battery provides power, and a BeagleBone Blue runs a modified version of the ArduPilot (APM) open-source software. 
A propeller guard frame consists of four custom 3D-printed corner pieces connected by eight carbon fiber rods. The propeller guard assembly measures 56cm diagonally ($\epsilon=28$cm), has a mass of 200g, and provides resilience against minor in-flight collisions. Each vehicle has mass 1075$\pm$10g.

Fig. \ref{fig:full_system} describes the experimental setup. Vehicle state is provided by an Optitrack motion capture system with eight Prime13 cameras. A Ground Control Station receives pose estimates of all vehicles from the motion capture system and uses individual 2.4GHz XBee wireless serial radios to send each vehicle its position ($\mathbf{r}_{i}, \forall i \in \mathcal{V}$), followers' desired positions ($\mathbf{r}_{i,d}, \forall i \in \mathcal{V}_F$), and a synchronization byte (Sync) all at 60Hz. A 40ms time delay was measured experimentally between Optitrack pose receipt on the vehicle and APM's onboard state estimate. 
A pilot uses a 2.4GHz DSMX Transmitter for manual override in case of system anomalies; a separate computer receives telemetry data from each vehicle over WiFi. 
All multi-vehicle tests were conducted in M-Air, a 80'$\times$120'$\times$50' outdoor netted facility at the University of Michigan. A wind vane and cup anemometer were used to obtain wind data. Outdoor tests were conducted at night to improve motion capture performance.


\begin{figure}[]
 	\centering
	\includegraphics[width = \columnwidth, trim=79 685 208 5, clip]{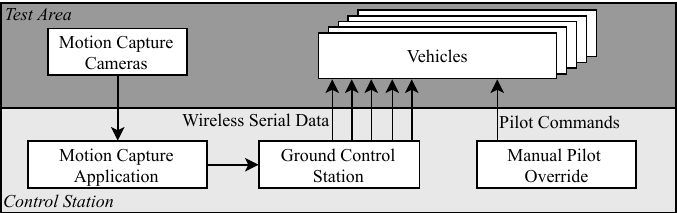}
	\caption{Experimental system block diagram} 
	\label{fig:full_system}
\end{figure}

Fig. \ref{fig:controller} outlines vehicle software. We modified APM's cascaded proportional-integral-derivative (PID) position control to use Optitrack position data. In cascaded PID,  position tracking error is scaled by a proportional gain and added to the feed-forward velocity set by  desired velocity $\dot{\mathbf{r}}_{d}$. Velocity estimates are obtained from a derivative filter on position estimate (Eq. \ref{eq:derivativeFilter}) where $T$ is the sampling period \cite{snrd}. Velocity tracking error becomes a desired acceleration and error from the acceleration tracking loop is fed into an attitude controller that relies on three-axis onboard inertial measurement unit (IMU) data. APM runs the   control loop at 400Hz. 
Leaders use Sync and a pre-loaded flight path file to generate desired positions and velocities for the position controller. Followers receive a position setpoint via wireless serial that is the weighted sum of its neighbors' real-time positions set per continuum deformation. 
Note that for a real application there would be no ground station, agents would send neighbors their positions directly, and followers would compute their desired positions on-board. While our implementation uses a centralized ground station, the actual computation of follower desired positions behaves as if there was only local communication.




\begin{equation}
    \label{eq:derivativeFilter}
    \hat{\dot{\mathbf{r}}}\left( t \right) = 
    \frac{  
        2 \left[
            \mathbf{r} \left( t - T \right) 
            - \mathbf{r} \left( t - 3T \right) 
        \right] 
        + \left[
            \mathbf{r} \left( t \right)
            - \mathbf{r} \left( t - 4T \right)
        \right] }
    {8T}
\end{equation}

\begin{figure}[]
 	\centering
 	\includegraphics[width = \columnwidth, trim=0 0 0 0, clip]{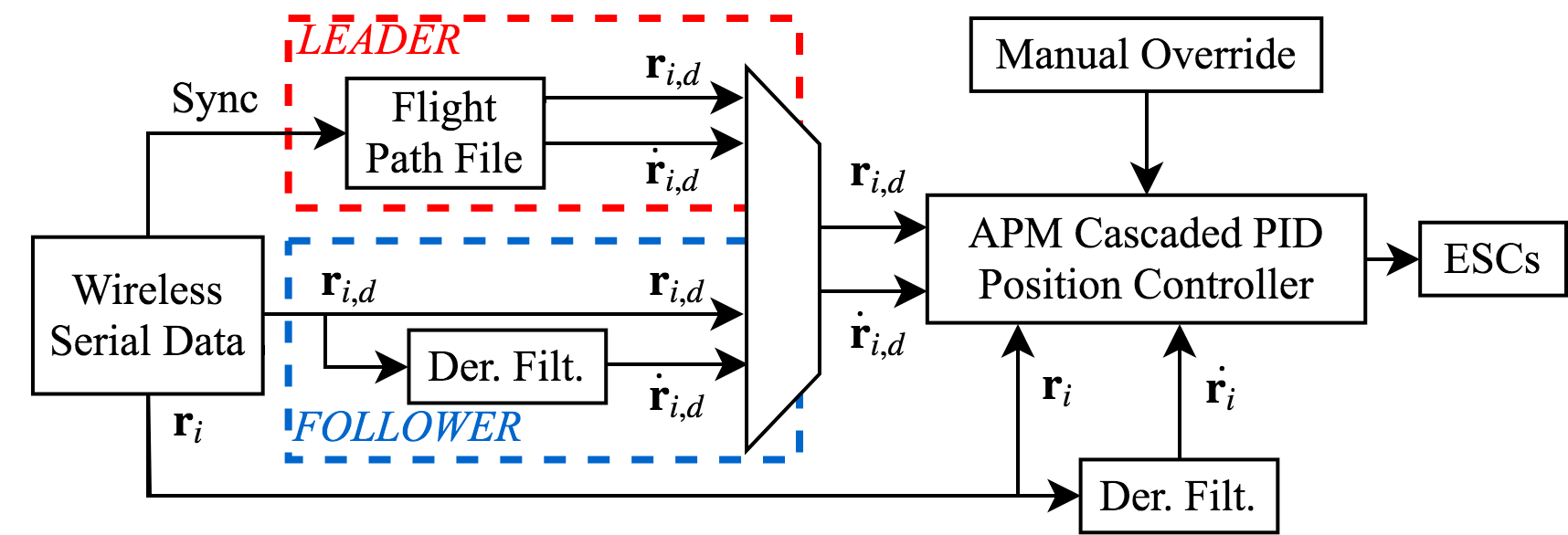}
	\caption{Vehicle software block diagram} 
	\label{fig:controller}
\end{figure}

\section{Experimental Results}
\label{sec:experimental-results}
 We conducted a series of flights with a single vehicle to empirically determine the \textit{local} deviation bound, $\delta_{l}$, where $\|\mathbf{r}_i-\mathbf{r}_{i,d}\|\leq\delta_{l}, \forall i\in \mathcal{V} $
 (Sec. \ref{subsec:controller-error}). Results were used to set bounding parameter $\delta_{g}$ as $\delta_{l}$ in the computation for $\lambda_{\mathrm{min}}$ in Eq. \ref{eqn:lamMin}. 
Given the set of prescribed leader trajectories, we command the follower agents in two ways. In the first test series, (Sec. \ref{subsec:globalDesiredResults}) we set the desired position of each follower ($\mathbf{r}_{i,d}$) to its global desired position ($\mathbf{r}_{i, HT}$) $\forall i \in \mathcal{V}_F$ as depicted in Fig. \ref{fig:bothArr}. In effect, the desired position of the followers is a weighted sum of leaders' desired positions. 
With this approach, $\delta_{l}$ and $\delta_{g}$ are equivalent, meaning collision avoidance and containment are guaranteed from Thm. \ref{thm:Thm1}. In the second approach (Sec. \ref{subsec:localCommsResults}) we compute desired follower position as a weighted sum of the actual position of three neighbor agents as prescribed in Eq. \ref{eqn:localDesired} (using the weights in Eq. \ref{eq:commWeights}). This approach, although more practical for a large and spatially distributed system (due to shorter communication links), no longer guarantees that the local deviation bound for each agent $\delta_{l}$ is a bound on the global deviation of the followers $\|\mathbf{r}_i-\mathbf{r}_{i,HT}\|$ $\forall i \in \mathcal{V}_F$. 


\begin{equation}
\begin{bmatrix}
w_{4,1}\\
w_{4,3}\\
w_{4,5}
\end{bmatrix}
=
\begin{bmatrix}
0.5\\
0.134\\
0.366
\end{bmatrix}
,
\begin{bmatrix}
w_{5,2}\\
w_{5,3}\\
w_{5,4}
\end{bmatrix}
=
\begin{bmatrix}
0.5\\
0.134\\
0.366
\end{bmatrix}
\label{eq:commWeights}
\end{equation}


\subsection{Determining Local Deviation Bound}
\label{subsec:controller-error}


To determine local deviation bound $\delta_{l}$, we flew a single vehicle in a 1m edge-length square while varying parameters to characterize error sources. Parameters varied include battery voltage and environment (indoor vs. outdoor with negligible wind), largest desired velocity from quintic spline guidance ($v_\mathrm{max}$), and altitude above ground level. We also considered the effect of disturbances induced by nearby  vehicles by flying all five agents, with followers using global desired positions, to see if controller performance varied from the single-vehicle tests. 

Results show that battery voltage, indoor vs. outdoor environment, and altitude above ground (minimum 1m) have negligible impact on $\delta_l$. Battery voltage was varied from 11.4V to 12.5V, a typical operating range, with less than 10\% change in $\delta_{l}$. 
We varied flight altitude from 0.75m to 1.75m in increments of 0.25m, with a 15\% larger standard deviation in error at 0.75m than at other altitudes. Since all altitudes above 0.75m had similar performance, we flew multi-agent tests at 1.5m above ground level.

To check dependence on $v_\mathrm{max}$, a single vehicle was flown using the quintic spline guidance law from Eq. \ref{eqn:spline}. 
We relax the constraint on final desired position $\mathbf{r}_{i,d}(t_f) = \mathbf{r}_{i,f}$ and instead command a particular velocity halfway through each flight segment $\dot{\mathbf{r}}_{i,d}(\frac{t_f - t_0}{2}) = v_{max}$. For this test series, time of flight between each waypoint pair is fixed at ($t_f - t_0$) = 3.75s which results in a translation of 1m when $v_\mathrm{max}$ = 50$\frac{cm}{s}$.



     
\begin{table}[]
    \centering
    \begin{tabular}[pos]{r || c | c | c}
     $v_\mathrm{max}$ (cm/s) & Mean (cm) & Std. Dev (cm) &  Max (cm) \\ \hline
     0 & 4.80 & 1.90 & 12.44 \\
     25 & 6.64 & 3.04 & 15.95 \\
     50 & 7.94 & 4.27 & 22.65  \\
     75 & 11.64 & 6.65 & 29.77  \\ 
     100 & 14.53 & 9.23 & 36.44  \\ 
\end{tabular}
    \caption{Statistical parameters of local deviation increase as $v_\mathrm{max}$ increases. Tests performed indoors at 150cm altitude.}
    \label{tab:errVmax}
\end{table}

Table \ref{tab:errVmax} shows that tracking error mean, max, and standard deviation increase as $v_\mathrm{max}$ increases. We believe a large source of error is introduced in state estimate delay. Each agent receives a position estimate from Optitrack with about a 40ms delay. Then, a derivative filter (Eq. \ref{eq:derivativeFilter}) is used on position data to estimate the actual velocity. The filter adds an additional delay to the velocity estimate, but characterization of the derivatives filter's delay and its effects on position error was not explored for this paper.  



For data shown in Table \ref{tab:errVmax}, the $v_\mathrm{max} = 0\frac{cm}{s}$ hover case is computed over 40 flights. The other datapoints, $v_\mathrm{max} = \{25, 50, 75, 100\} \frac{cm}{s}$ in Table \ref{tab:errVmax}, have ten flights for each $v_\mathrm{max}$. All flights in this dataset were conducted indoors at an altitude of 150 cm above the floor. 

To investigate downwash impact on neighboring agents, we flew all five quadrotors with followers using global desired positions and compared results to those in Table \ref{tab:errVmax}. Leaders flew the trajectory from  Sec. \ref{sec:generating-path} with $\delta_{g}$ = 40cm, $(t_f - t_0)$ = 3.75s and $v_{max}$ = 50$\frac{cm}{s}$ as before. Two full flights with five vehicles were flown  resulting in ten datasets overall.

\begin{table}[]
    \centering
    \begin{tabular}[pos]{c || c | c | c}
     & Mean (cm) & Std. Dev (cm) &  Max (cm) \\ \hline
    Five-Agent Flights &  9.69 & 5.16 & 30.33\\
    Single-Agent Flights &  7.94 & 4.27 & 22.65 \\
\end{tabular}
    \caption{Statistical parameters of inter-agent disturbance characterization. Tests performed outdoors at 150cm altitude}
    \label{tab:interAgentDisturbance}
\end{table}

Table \ref{tab:interAgentDisturbance} shows the results of the five-agent flights alongside the single-agent flight data copied from the 50$\frac{cm}{s}$ case in Table \ref{tab:errVmax}. For the five-agent flights, agent 4 had an anomaly where it did not run any controller-related computations for 0.3s resulting a maximum error of 41.68cm. which is treated as an outlier and excluded from error characterization. 

During the five-agent flights, the XBees occasionally drop packets. Fig. \ref{fig:dt} shows the change in time $\Delta t$ between two consecutive Optitrack pose estimates received by an agent over Xbee. Fig. \ref{fig:dtGood} shows a typical single-agent flight while Fig. \ref{fig:dtBad} is from a five-agent flight where the larger $\Delta t$ corresponds to an integer number of packets lost. 
Five-agent flights may have larger error in part because of Optitrack data packet drop. 
Results from 
Tables \ref{tab:errVmax} and \ref{tab:interAgentDisturbance} show that a local deviation bound of $\delta_{l}$ = 40cm is sufficient. The first five-agent flights sent followers global desired positions to support analysis of continuum deformation constraints in Sec. \ref{subsec:globalDesiredResults}. 

\begin{figure}[]
 	\centering
 	\begin{subfigure}[c]{0.48\columnwidth}
	    \includegraphics[width = \textwidth,trim=100 230 130 230, clip]{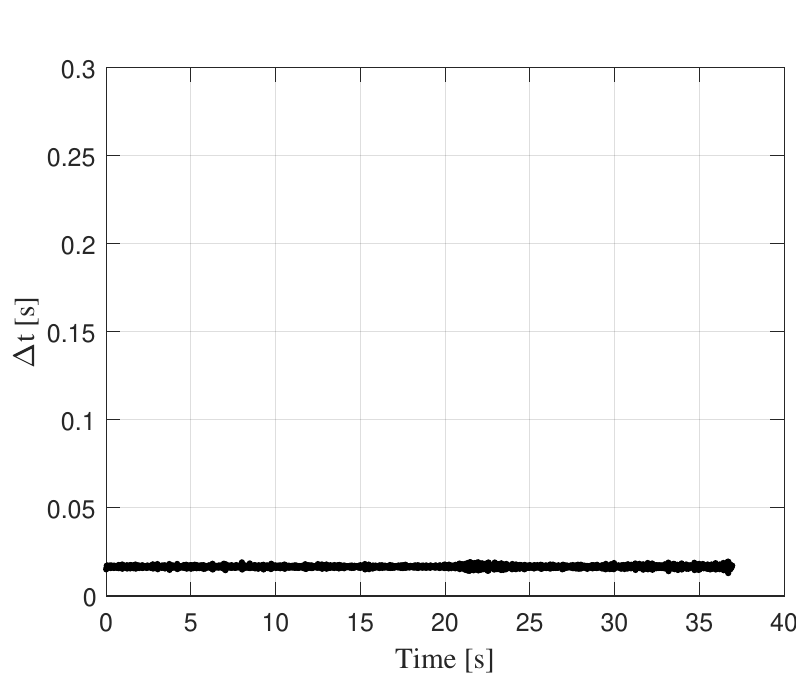}
	    \caption{Single-agent flight.} 
	    \label{fig:dtGood}
    \end{subfigure}
  	\begin{subfigure}[c]{0.48\columnwidth}
	    \includegraphics[width = \textwidth,trim=100 230 120 230, clip]{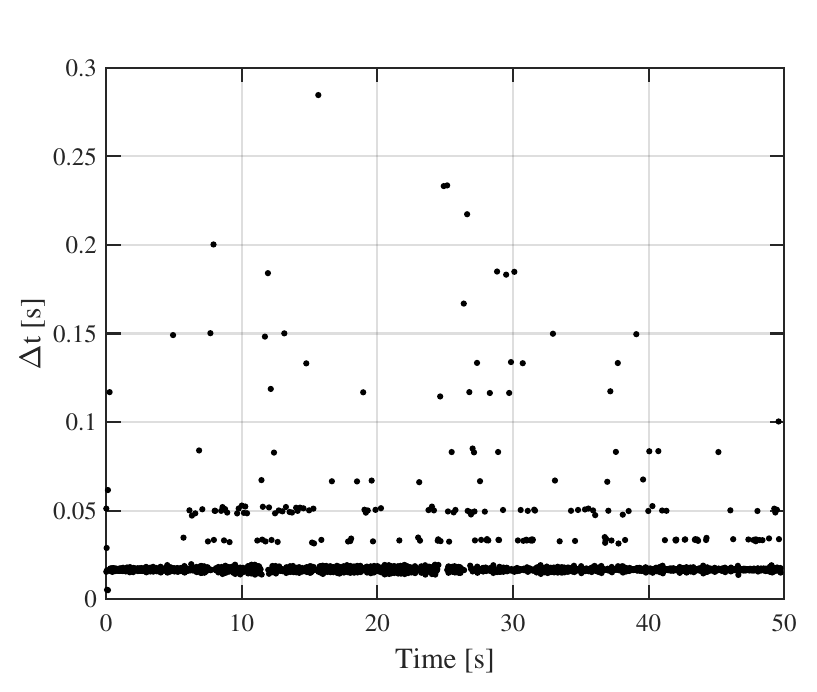}
	    \caption{Five-agent flight.} 
	    \label{fig:dtBad}
    \end{subfigure}
	\caption{Data is received consistently over Xbee radios at 60Hz for single-agent flights. Packets dropped in five-agent flights.} 
	\label{fig:dt}
\end{figure}
\subsection{Five-Agent Flight: Followers with Pre-set Waypoints}
\label{subsec:globalDesiredResults}
Fig. \ref{fig:expTest1} shows the results for a five-agent test with follower positions based on a weighted sum of the leader desired positions. The wind speed was measured to be 0mph. $\delta_{g}$ = 40cm was used as a conservative error bound. This test was run to distinguish experimental setup error sources (downwash, communication interference, etc) from coordination-induced error. The four plots in Fig. \ref{fig:expTest1} show team performance with respect to the four constraints from Sec. \ref{sec:generating-path}: follower containment within the leading triangle, collision avoidance between agents, bounding of deviation from the local desired position, and bounding of deviation from the global desired position. The black vertical lines on each plot use the same symbols from Fig. \ref{fig:trajOutline} to show the desired waypoint of the leaders in time. The unlabeled vertical lines between Star ($\star$) and Diamond ($\lozenge$) represent an intermediate waypoint added to fit all agents within our motion capture workspace. 

Fig. \ref{fig:expTest1}A shows the distance from both followers to the leader boundary with a flat horizontal line at $\epsilon$ = 28cm and another horizontal line at $-\delta_{g}$ = -40cm. Crossing the $\epsilon$ line means the follower left the leader boundary while crossing the $-\delta_{g}$ line means the follower left the outer bounding triangle. Thus, the followers never leave the leading triangle since their lines never go below the horizontal $\epsilon$ line. Fig. \ref{fig:expTest1}B shows each agent's distance to its nearest neighbor along with a $2\epsilon$ = 56cm line. There are no inter-agent collisions because no agent has a distance that goes below the $2\epsilon$ line. Fig. \ref{fig:expTest1}C and Fig. \ref{fig:expTest1}D show the deviation of each agent from its local and global desired position. No agent has a local or global deviation exceeding $\delta_{g}$ = 40cm (represented by the horizontal line). Figs. \ref{fig:expTest1}C and \ref{fig:expTest1}D are identical since followers are being commanded to global desired positions. 



\begin{figure}[]
 	\centering
	\includegraphics[width = \columnwidth,trim=0 135 20 140, clip]{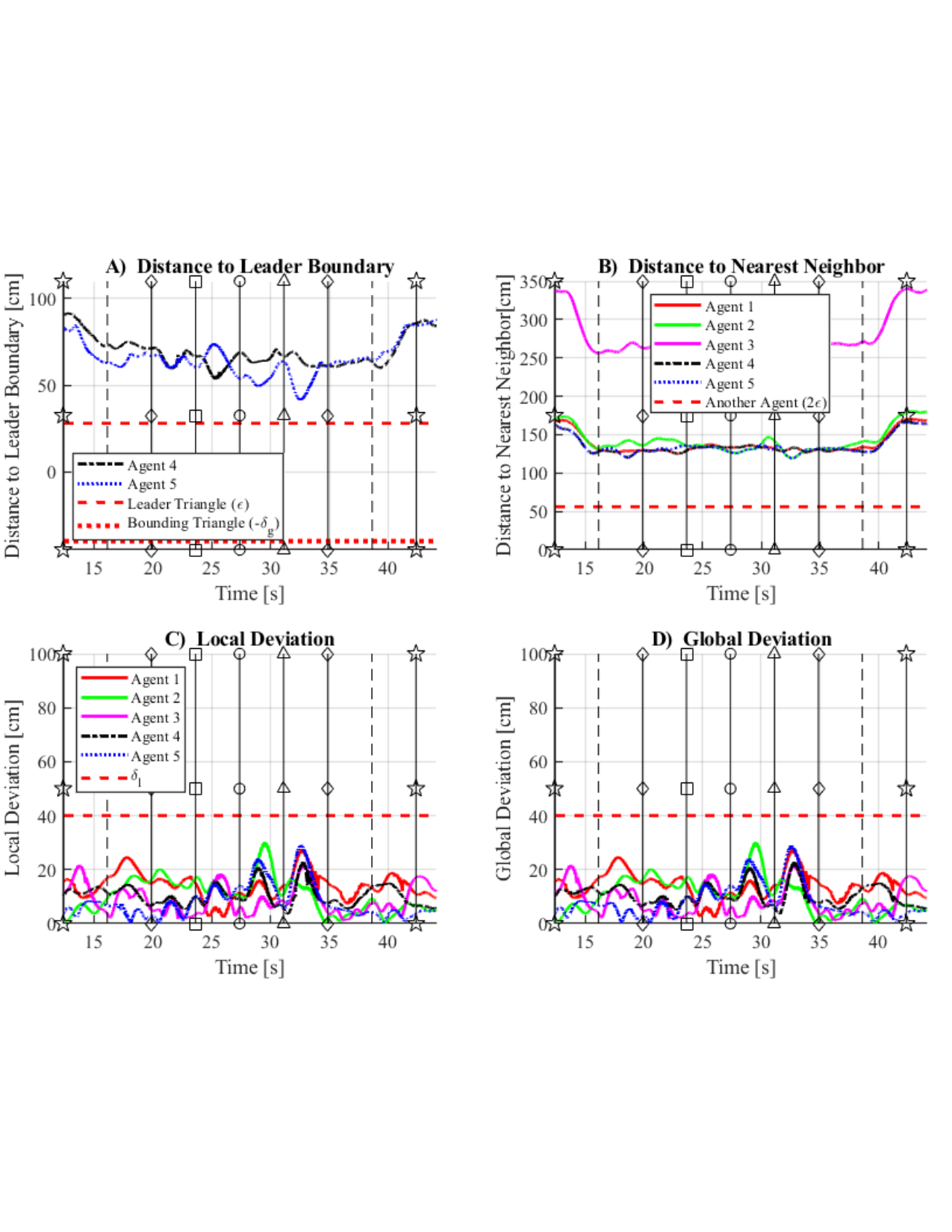}
	\caption{In-flight constraint values over time for the five-quadrotor continuum deformation trajectory.  Follower desired positions are computed from desired leader positions.} 
	\label{fig:expTest1}
\end{figure}


\subsection{Five-Agent Flight: Followers with Local Communication }
\label{subsec:localCommsResults}
Fig. \ref{fig:expTest3} shows results for the five quadrotor system running continuum deformation under local communication. The average wind speed was 2.1mph at 75$^{\circ}$ relative to +X (effectively the direction of motion in Fig. \ref{fig:trajOutline}C: $\square$ to $\bigcirc$).   

Figs. \ref{fig:expTest3}B and Figs. \ref{fig:expTest3}C show there are no inter-agent collisions and that no agent exceeds the local deviation bound. Fig. \ref{fig:expTest3}D shows that followers violate the global deviation constraint in all segments except when the formation moves in the direction of the wind:($\square$ to $\bigcirc$) and ($\lozenge$ to the intermediate waypoint). In segments where the global deviation constraint is satisfied, neither follower leaves the leader triangle as predicted by Thm. \ref{thm:Thm1}. However, there is a segment (intermediate waypoint to $\lozenge$) where followers violate global deviation while within the leader boundary.

The large follower deviation in Fig. \ref{fig:expTest3}D is caused in part by steady-state error in agent y-components, most likely due to wind which effectively shifted the entire formation in +Y. 
Follower local desired positions ($\mathbf{r}_{i,d}$) shift with the leaders while their global desired positions ($\mathbf{r}_{i, HT}$) are unaffected, resulting in a compounding error effect with respect to the follower global desired position reference.

\begin{figure}[]
 	\centering
	\includegraphics[width = \columnwidth,
	trim=0 135 20 140,
	clip]{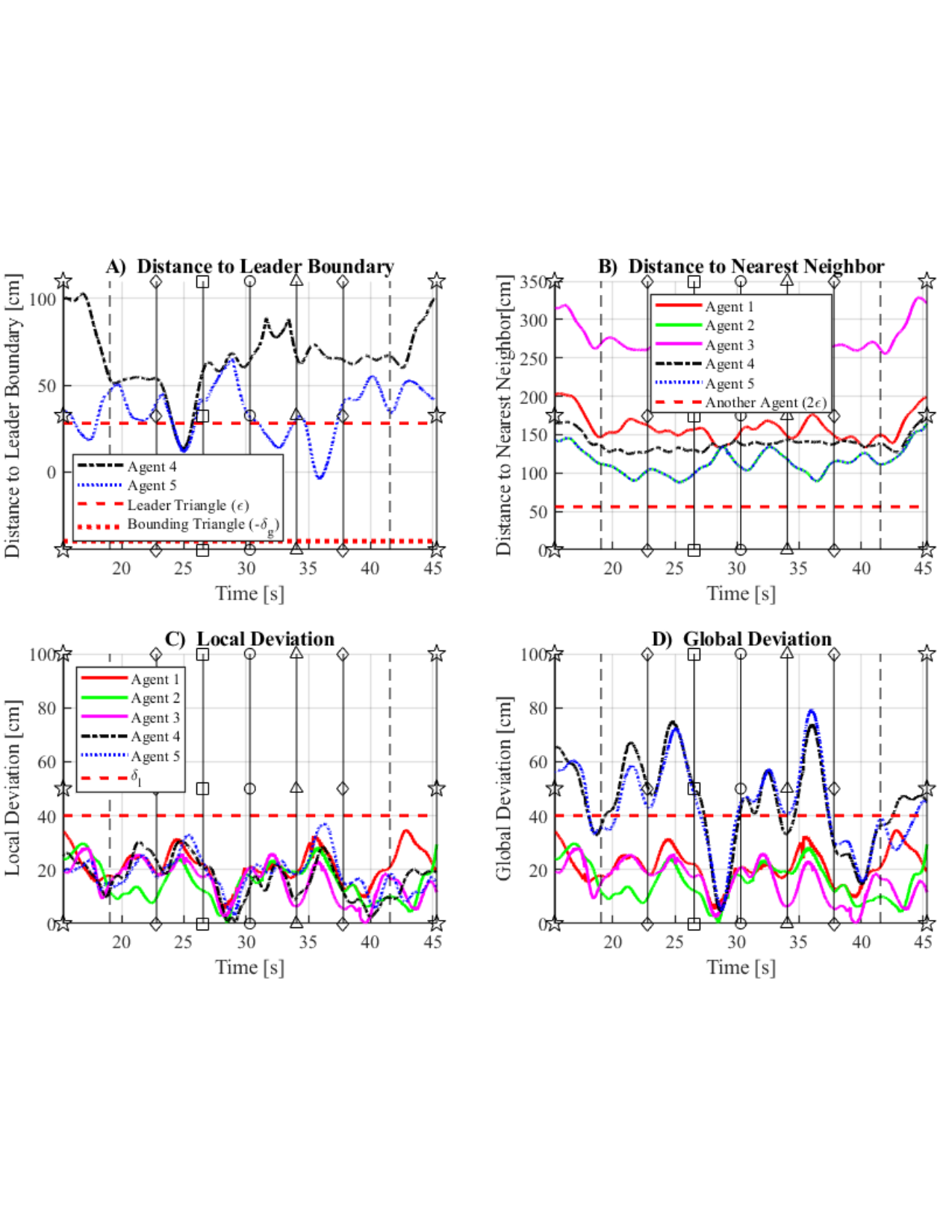}
	\caption{In-flight constraint values over time for the five-quadrotor continuum deformation trajectory.  Follower desired positions are computed from local communication of actual neighbor positions.}
	\label{fig:expTest3}
\end{figure}



\section{Discussion}
\label{sec:discussion}

In tests where local deviation is equivalent to global deviation (Fig. \ref{fig:expTest1}), all four constraints were met. Under strictly local communication of actual positions (Fig. \ref{fig:expTest3}), collision and local deviation constraints were met while containment and global deviation were violated. All constraints were satisfied in segments of the flight where the global deviation constraint was met (Thm. \ref{thm:Thm1}). However, we found it difficult to predict follower deviation in advance as time delay and error compounding from leaders to follower could not be observed in single-vehicle tests. A large error bound can be prescribed to satisfy all constraints based on evaluation of errors observed in continuum deformation flights, but such a conservative approach would require the quadrotors to be substantially separated which may not be practical.

\section{Conclusion}
\label{sec:conclusion}
This paper presented results from the first experimental evaluation of continuum deformation using a five-quadrotor team. 
Results successfully demonstrated the theory in an outdoor motion capture environment but motivate follow-on work to incorporate knowledge of single-vehicle error bounds directly into the design of safe leader trajectories and in turn safe continuum reference geometries.
Although this paper demonstrated scaling and translation of a 2D leading triangle, continuum deformation allows shear, rotation, and composition of these four fundamental deformations simultaneously. We plan to conduct further experiments that test constraint satisfaction given more complex leader trajectories.

\section*{Acknowledgements}
This work was supported in part by National Science Foundation (NSF) Grant CNS 1739525.


\bibliographystyle{IEEEtran}
\bibliography{IEEEabrv,refs}

\begin{thebibliography}{10}
\providecommand{\url}[1]{#1}
\csname url@rmstyle\endcsname
\providecommand{\newblock}{\relax}
\providecommand{\bibinfo}[2]{#2}
\providecommand\BIBentrySTDinterwordspacing{\spaceskip=0pt\relax}
\providecommand\BIBentryALTinterwordstretchfactor{4}
\providecommand\BIBentryALTinterwordspacing{\spaceskip=\fontdimen2\font plus
\BIBentryALTinterwordstretchfactor\fontdimen3\font minus
  \fontdimen4\font\relax}
\providecommand\BIBforeignlanguage[2]{{%
\expandafter\ifx\csname l@#1\endcsname\relax
\typeout{** WARNING: IEEEtran.bst: No hyphenation pattern has been}%
\typeout{** loaded for the language `#1'. Using the pattern for}%
\typeout{** the default language instead.}%
\else
\language=\csname l@#1\endcsname
\fi
#2}}

\bibitem{ren2004decentralized}
W.~Ren and R.~W. Beard, ``Decentralized scheme for spacecraft formation flying
  via the virtual structure approach,'' \emph{Journal of Guidance, Control, and
  Dynamics}, vol.~27, no.~1, pp. 73--82, 2004.

\bibitem{ren2007information}
W.~Ren, R.~W. Beard, and E.~M. Atkins, ``Information consensus in multivehicle
  cooperative control,'' \emph{IEEE Control Systems}, vol.~27, no.~2, pp.
  71--82, 2007.

\bibitem{rahbari2014incremental}
N.~Rahbari-Asr, U.~Ojha, Z.~Zhang, and M.-Y. Chow, ``Incremental welfare
  consensus algorithm for cooperative distributed generation/demand response in
  smart grid,'' \emph{IEEE Transactions on Smart Grid}, vol.~5, no.~6, pp.
  2836--2845, 2014.

\bibitem{inzucchi2015management}
S.~E. Inzucchi, R.~M. Bergenstal, J.~B. Buse, M.~Diamant, E.~Ferrannini,
  M.~Nauck, A.~L. Peters, A.~Tsapas, R.~Wender, and D.~R. Matthews,
  ``Management of hyperglycemia in type 2 diabetes, 2015: a patient-centered
  approach: update to a position statement of the american diabetes association
  and the european association for the study of diabetes,'' \emph{Diabetes
  care}, vol.~38, no.~1, pp. 140--149, 2015.

\bibitem{qin2017containment}
J.~Qin, W.~X. Zheng, H.~Gao, Q.~Ma, and W.~Fu, ``Containment control for
  second-order multiagent systems communicating over heterogeneous networks,''
  \emph{IEEE transactions on neural networks and learning systems}, 2017.

\bibitem{li2015containment}
Z.~Li, Z.~Duan, W.~Ren, and G.~Feng, ``Containment control of linear
  multi-agent systems with multiple leaders of bounded inputs using distributed
  continuous controllers,'' \emph{International Journal of Robust and Nonlinear
  Control}, vol.~25, no.~13, pp. 2101--2121, 2015.

\bibitem{rastgoftar2016continuum}
H.~Rastgoftar, \emph{Continuum deformation of multi-agent systems}.\hskip 1em
  plus 0.5em minus 0.4em\relax Springer, 2016.

\bibitem{rastgoftar2017continuum}
H.~Rastgoftar and E.~M. Atkins, ``Continuum deformation of multi-agent systems
  under directed communication topologies,'' \emph{Journal of Dynamic Systems,
  Measurement, and Control}, vol. 139, no.~1, 2017.

\bibitem{liu2015finite}
Y.~Liu and Z.~Geng, ``Finite-time formation control for linear multi-agent
  systems: A motion planning approach,'' \emph{Systems \& Control Letters},
  vol.~85, pp. 54--60, 2015.

\bibitem{cao2013overview}
Y.~Cao, W.~Yu, W.~Ren, and G.~Chen, ``An overview of recent progress in the
  study of distributed multi-agent coordination,'' \emph{IEEE Trans. on
  Industrial informatics}, vol.~9, no.~1, pp. 427--438, 2013.

\bibitem{yang2016distributed}
Z.~Yang, J.~Xiang, and Y.~Li, ``Distributed consensus based supply-demand
  balance algorithm for economic dispatch problem in a smart grid with
  switching graph,'' \emph{IEEE Trans. on Indust. Electronics}, 2016.

\bibitem{rao2014sliding}
S.~Rao and D.~Ghose, ``Sliding mode control-based autopilots for leaderless
  consensus of unmanned aerial vehicles,'' \emph{IEEE transactions on control
  systems technology}, vol.~22, no.~5, pp. 1964--1972, 2014.

\bibitem{ren2009distributed}
W.~Ren, ``Distributed leaderless consensus algorithms for networked
  euler--lagrange systems,'' \emph{Intl. J. of Control}, vol.~82, no.~11, 2009.

\bibitem{han2014multiple}
J.~Han and Y.~Chen, ``Multiple uav formations for cooperative source seeking
  and contour mapping of a radiative signal field,'' \emph{Journal of
  Intelligent \& Robotic Systems}, vol.~74, no. 1-2, pp. 323--332, 2014.

\bibitem{guerrero2012mini}
J.~A. Guerrero, P.~Castillo, S.~Salazar, and R.~Lozano, ``Mini rotorcraft
  flight formation control using bounded inputs,'' \emph{Journal of intelligent
  \& robotic systems}, vol.~65, no. 1-4, pp. 175--186, 2012.

\bibitem{song2010second}
Q.~Song, J.~Cao, and W.~Yu, ``Second-order leader-following consensus of
  nonlinear multi-agent systems via pinning control,'' \emph{Systems \& Control
  Letters}, vol.~59, no.~9, pp. 553--562, 2010.

\bibitem{xu2016robust}
L.~Xu, H.~Fan, Z.~Dong, and W.~Wang, ``Robust consensus control for
  leader-following multi-agent system under switching topologies,'' in
  \emph{Intl. Conf. on Cybernetics, Robotics and Control}.\hskip 1em plus 0.5em
  minus 0.4em\relax IEEE, 2016.

\bibitem{drak2014autonomous}
A.~Drak, M.~Hejase, M.~ElShorbagy, A.~Wahyudie, and H.~Noura, ``Autonomous
  formation flight algorithm and platform for quadrotor uavs,'' \emph{Intl. J.
  of Robotics \& Mech.}, vol.~1, no.~4, pp. 124--132, 2014.

\bibitem{namerikawa2018consensus}
T.~Namerikawa, Y.~Kuriki, and A.~Khalifa, ``Consensus-based cooperative
  formation control for multiquadcopter system with unidirectional network
  connections,'' \emph{Journal of Dynamic Systems, Measurement, and Control},
  vol. 140, no.~4, p. 044502, 2018.

\bibitem{nigam2012control}
N.~Nigam, S.~Bieniawski, I.~Kroo, and J.~Vian, ``Control of multiple uavs for
  persistent surveillance: algorithm and flight test results,'' \emph{IEEE
  Trans. on Cont. Systems Tech.}, vol.~20, no.~5, pp. 1236--1251, 2012.

\bibitem{han2013low}
J.~Han, Y.~Xu, L.~Di, and Y.~Chen, ``Low-cost multi-uav technologies for
  contour mapping of nuclear radiation field,'' \emph{Journal of Intelligent \&
  Robotic Systems}, vol.~70, no. 1-4, pp. 401--410, 2013.

\bibitem{rastgoftar2018cooperative}
H.~Rastgoftar and E.~M. Atkins, ``Cooperative aerial payload transport guided
  by an in situ human supervisor,'' \emph{IEEE Transactions on Control Systems
  Technology}, no.~99, pp. 1--16, 2018.

\bibitem{rastgoftar2014evolution}
H.~Rastgoftar and S.~Jayasuriya, ``Evolution of multi-agent systems as
  continua,'' \emph{J. Dynamic Sys., Meas., \& Control}, vol. 136, no.~4, 2014.

\bibitem{lai2009introduction}
W.~M. Lai, D.~H. Rubin, D.~Rubin, and E.~Krempl, \emph{Introduction to
  continuum mechanics}.\hskip 1em plus 0.5em minus 0.4em\relax
  Butterworth-Heinemann, 2009.

\bibitem{rastgoftar2016asymptotic}
H.~Rastgoftar, H.~G. Kwatny, and E.~M. Atkins, ``Asymptotic tracking and
  robustness of mas transitions under a new communication topology,''
  \emph{IEEE Trans. on Automation Science and Engineering}, 2016.

\bibitem{snrd}
P.~Holoborodko, ``Smooth noise robust differentiators,''
  http://www.holoborodko.com/pavel/numerical-methods/numerical-derivative/smooth-low-noise-differentiators/,
  2008.

\end{thebibliography}

\end{document}